\theoremstyle{plain}
\newtheorem{lem}{Lemma}
\newtheorem{prop}{Proposition}
\begin{document}

\title{Nash SIR: An Economic-Epidemiological Model of Strategic Behavior During a Viral Epidemic}
\author{David McAdams\thanks{Fuqua School of Business and Economics Department, Duke University, Durham, North Carolina, USA. Email: david.mcadams@duke.edu. I thank Carl Bergstrom, Yonatan Grad, and Marc Lipsitch for encouragement and Sam Brown, Troy Day, Nick Papageorge, Elena Quercioli, Lones Smith, Yangbo Song, Marta Wosinska, and participants at the Johns Hopkins Pandemic Seminar in April 2020 for helpful comments.
}
}

\date{May 3, 2020} 
\maketitle

\begin{abstract}
This paper develops a Nash-equilibrium extension of the classic SIR model of infectious-disease epidemiology (``Nash SIR''), endogenizing people's decisions whether to engage in economic activity during a viral epidemic and allowing for complementarity in social-economic activity. An \textit{equilibrium epidemic} is one in which Nash equilibrium behavior during the epidemic generates the epidemic. There may be multiple equilibrium epidemics, in which case the epidemic trajectory can be shaped through the coordination of expectations, in addition to other sorts of interventions such as stay-at-home orders and accelerated vaccine development. An algorithm is provided to compute all equilibrium epidemics.
\end{abstract}


People's choices impact how a viral epidemic unfolds.
As noted in a March 2020 \emph{Lancet} commentary on measures to control the current coronavirus pandemic, ``How individuals respond to advice on how best to prevent transmission will be as important as government actions, if not more important'' (Anderson et al (2020)).\nocite{Anderson_etal2010} Early on when pre-emptive measures could be especially effective (Dalton, Corbett, and Katelaris (2020)), \nocite{Dalton_etal2020} people are at little personal risk of exposure and hence may be unwilling to follow orders to ``distance'' themselves from others. On the other hand, as infections mount and the health-care system is overwhelmed, people may then voluntarily take extreme measures to limit their exposure to the virus. 
Clearly, the way in which people's incentives change during the course of an epidemic is essential to how the epidemic itself progresses, and how widespread are its harms.

This paper develops a Nash equilibrium extension of the classic Susceptible-Infected-Recovered (SIR) model of viral epidemiology. ``Nash SIR'' augments the well-known system of differential equations that characterizes epidemiological dynamics in the SIR model with a system of Bellman equations characterizing the dynamics of agent welfare and a Nash-equilibrium condition characterizing the dynamics of agent behavior. What emerges is a model of \emph{equilibrium epidemics} that, while highly stylized, sheds light on the interplay between epidemiological dynamics, economic behavior, and the health and economic harm done during the course of a viral epidemic. 

The paper's most important modeling innovation is to account for the \textit{economic complementarities} of personal interaction that can be lost when agents ``distance'' themselves to slow viral transmission. Such complementarities are missing from the existing literature (discussed below), but can impact the progression of an epidemic in meaningful ways.
In particular, a \textit{positive feedback} can arise in which people complying with public-health directives induces others to do so as well, and vice versa. As non-essential businesses close, there is less that people are able to do outside the home, reducing their incentive to go out. Similarly, as co-workers in an office (or professors in a university) stay home, there is less reason to go to the office yourself, especially when the work involved is collaborative and can be managed remotely.\footnote
{
The opposite is true of essential work. The more that essential workers are absent, the more valuable the work done by those who remain. More generally, there may be congestion effects associated with social-economic activity, increasing the benefit one gets as others reduce their activity. This paper abstracts from congestion for ease of exposition.   
}  

Developed independently, this paper's Nash-equilibrium SIR (``Nash SIR'') model generalizes the Nash SIR model in \cite{farboodi2020internal}, by allowing for complementarities in social-economic activity.  

In the traditional SIR model, the trajectory of the epidemic is completely determined by epidemiological fundamentals. Similarly, in \cite{farboodi2020internal}'s Nash SIR model, the epidemic has a unique equilibrium trajectory. By contrast, in this paper's Nash SIR model, there may be multiple potential trajectories for the epidemic, each of which induces agents to behave in a way that generates that epidemic trajectory. Because of this indeterminancy, the ultimate harm done during an epidemic, in terms of lost lives and lost livelihoods, can hinge on what agents believe about what others believe. This paper's model therefore highlights the importance of coordinating mechanisms, such as effective political leadership, in shaping expectations during an epidemic. 

In addition to \textit{coordinating interventions} such as a political leader's public statements, \textit{fundamental interventions} such as public policies, public-health programs, scientific effort, and new cultural practices impact the set of equilibrium-epidemic possibilities. Such impacts can be explored using the Nash SIR model, by computing the set of equilibrium epidemics with and without the intervention in question. To enable such exploration, I provide an algorithm to compute all equilibrium epidemics in any instance of the model. This algorithm requires solving a set of problems, each of which corresponds to a different potential ``final condition'' and involves solving a system of first-order differential equations that augments the traditional SIR equations. 

\paragraph{Relation to the literature.} This paper follows the dominant tradition within economics of modeling disease hosts as dynamically-optimizing agents with correct forward-looking beliefs. A few notable examples include \cite{geoffard1996rational}, \cite{kremer1996integrating}, \cite{adda2007behavior}, \cite{chan2016health}, and \cite{greenwood2019equilibrium}. More recently, there has been an outpouring of important work motivated by the SARS-CoV2 outbreak, much of it embedding economic models into an SIR (or closely related) framework. Some notable examples include \cite{alvarez2020simple}, \cite{bethune2020covid}, \cite{eichenbaum2020macroeconomics},  \cite{garabaldi2020modeling}, \cite{glover2020health}, \cite{jones2020optimal}, \cite{keppo2020avoidance}, \cite{krueger2020macroeconomic}, and \cite{toxvaerd2020equilibrium}.

There is also of course an enormous literature within epidemiology that models behavioral response to infectious disease. However, epidemiologists have been slow to adopt economics-style modeling, usually instead making ad hoc assumptions about behavior. For instance, \cite{bootsma2007effect} assume that people's intensity of social-contact avoidance during the 1918 flu pandemic varied depending on how many others in their community had recently died. An example that grapples with the dynamics of social distancing in the current pandemic is \cite{kissler2020social}. 

Thanks to the recent explosion of interest in economic epidemiology among economists, the gap between economics and infectious-disease epidemiology is closing. \cite{farboodi2020internal} provide an elegant Nash-equilibrium extension of the SIR model that augments the usual system of differential equations that governs epidemiological dynamics with just two additional differential equations. \cite{toxvaerd2020equilibrium} beautifully analyzes a similar equilibrium SIR model, establishing compelling features of the equilibrium trajectory.
Because of their analytical simplicity and tight connection to existing models and methods within epidemiology, \cite{farboodi2020internal}, \cite{toxvaerd2020equilibrium}, and others in this fast-growing literature could potentially have enormous influence on infectious-disease epidemiology, marrying the fields and promoting further cross-fertilization of ideas. 

Yet there is also a danger here. This new crop of equilibrium SIR models make an implicit assumption that the benefit people get from social-economic activity does not depend on others' activity. Consequently, the ``activity game'' that people play necessarily exhibits negative externalities (activity increases others' risk of infection) and strategic substitutes (increased risk of infection prompts others to be less active, \cite{bulow1985multimarket}). As a profession, we have strong insights about such games, insights that can be easily and powerfully communicated. These models could therefore be highly influential in terms of shaping public policy. However, the insights that we get from these models could be misguided if, in fact, the activity game exhibits positive externalities and/or strategic complements. This is especially important because, as I discuss in the concluding remarks, the qualitative nature of the game does indeed change during the course of the epidemic. 

The rest of the paper is organized as follows. Section \ref{section:model} presents the economic-epidemiological model, along with preliminary analysis. Section \ref{section:equilibrium} analyzes equilibrium epidemics in more detail. Section \ref{section:conclusion} discusses some limitations of the model and directions for future research.

\section{Model and Preliminary Analysis}\label{section:model}

This section presents the economic-epidemiological model, divided for clarity into three parts: \emph{the epidemic}, on how the epidemic process depends on agents' behavior (Section \ref{section:model:epidemic}); \emph{the economy,} on how the epidemic impacts economic activity, both directly by making people sick and indirectly by changing behavior (Section \ref{section:model:economy}); and \emph{individual and collective behavior,} on how the state of the epidemic and expectations about economic activity impact Nash-equilibrium behavior at each point along the epidemic trajectory (Section \ref{subsection:model:behavior}). 

\subsection{The epidemic}\label{section:model:epidemic}

There is a unit-mass population of hosts, referred to as ``agents.'' Building on the classic Susceptible-Infected-Recovered (SIR) model of viral epidemiology, each host is at each time $t \geq 0$ in one of five epidemiological states: ``susceptible'' ($S$) if as-yet-unexposed to the virus; ``carriage/contagious'' ($C$) if asymptomatically infected; ``infected/sick'' ($I$) if symptomatically infected; ``recovered from carriage'' ($R_C$) if immune but never sick; and ``recovered from sickness'' ($R_I$) if immune and previously sick.\footnote{It remains unknown whether those who recover from SARS-CoV2 infection are immune to re-infection and, if so, for how long (\cite{lipsitch2020immunity}. } 

\paragraph{Epidemiological distance.} 
At each point in time $t$, each agent who is not sick decides how intensively to distance themselves from others. Distancing with intensity $d_i \in [0,1]$ causes an agent to avoid fraction $\alpha d_i$ of ``meetings'' with other agents, where $\alpha \in (0,1]$ is a parameter capturing the maximal effectiveness of distancing. Agents who are sick are assumed to be automatically isolated, as if distancing with $\alpha= 1$. (The analysis extends easily to a more general context in which sick agents also transmit the virus.) 

Let $\Omega \equiv \{S, C, R_C, R_I\}$ denote the set of not-sick epidemiological states.  For each $\omega \in \Omega$, let $d_{\omega}(t)$ denote the average distancing intensity of those currently in state $\omega$ at time $t$ who choose to distance themselves. Let $\mathbf{d}_t \equiv \left( d_{\omega}(t'): \omega \in \Omega, 0 \leq t' < t \right)$ denote the \emph{collective distancing behavior} of the agent population up to time $t$, and let $\mathbf{d} \equiv \left( d_{\omega}(t): \omega \in \Omega, t \geq 0 \right)$ be their collective distancing behavior over the entire epidemic.

\paragraph{Epidemiological dynamics.}
The following notation is used to describe the state of the epidemic at each time $t \geq 0$, depending on agents' distancing behavior:
\begin{itemize}
    \item $S(t;\mathbf{d}_t) = $ mass of agents who are susceptible;
    \item $C(t;\mathbf{d}_t) = $ mass of agents who are in carriage, i.e., asymptotically  infected but not sick;
    \item $I(t;\mathbf{d}_t) = $ mass of agents who are sick;
    \item $R_C(;\mathbf{d}_t) = $ mass of agents who are immune and were not previously sick; and
    \item $R_I(t;\mathbf{d}_t) = $ mass of agents who are immune and were previously sick.
\end{itemize}
Because the population has unit mass, $\sum_{\omega \in \Omega} \omega(t) = 1$ for all $t$.

Agents transition between epidemiological states as follows:
\begin{itemize}
    \item $S \to C$: Susceptible agents become asymptomatically infected once ``exposed'' to someone currently infected, at a rate that depends on agents' behavior (details below);
    \item $C \to I$: Each agent with asymptomatic infection becomes sick at rate $\sigma > 0$; and
    \item $C \to R_C$ and $I \to R_I$: Each agent with infection clears their infection at rate $\gamma > 0$.
\end{itemize}
Initially at time $t=0$, mass $\Delta > 0$ have asymptomatic infection but no one is yet sick and no one is yet immune;\footnote
{The model can be easily extended to allow for innate immunity, by allowing some mass of hosts to be in states $R_I$ and $R_C$ at time $t=0$. For instance, during the ``second wave'' of SARS-CoV2 infections expected to arrive in Fall 2020, some hosts may retain immunity due to exposure during the first wave in Spring 2020. }
that is, $S(0) = 1 - \Delta$, $C(0) = \Delta$, and $I(0) = R_C(0) = R_I(0) = 0$.
Epidemiological dynamics at times $t > 0$ are then uniquely determined by the following system of differential equations:
\begin{align}
        S'(t;\mathbf{d}_t) & = - \beta (1 - \alpha d_S(t))(1 - \alpha d_C(t)) S(t;\mathbf{d}_t) C(t;\mathbf{d}_t)
\label{eqn_S}
\\
    C'(t;\mathbf{d}_t) & = - S'(t;\mathbf{d}_t) - (\sigma + \gamma) C(t;\mathbf{d}_t)
\label{eqn_C}
\\
    I'(t;\mathbf{d}_t) & = \sigma C(t;\mathbf{d}_t) - \gamma I(t;\mathbf{d}_t)
\label{eqn_I}
\\
    R_C'(t;\mathbf{d}_t) & = \gamma C(t;\mathbf{d}_t)
\label{eqn_RC}
\\
    R_I'(t;\mathbf{d}_t) & = \gamma I(t;\mathbf{d}_t)
\label{eqn_RI}
\end{align}
Let $\mathcal{E}(t;\mathbf{d}_t) \equiv  \left(S(t;\mathbf{d}_t), C(t;\mathbf{d}_t), I(t;\mathbf{d}_t), R_C(t;\mathbf{d}_t), R_I(t;\mathbf{d}_t) \right)$ denote the ``epidemic state'' at time $t$ and
$\mathcal{E}(\mathbf{d}) \equiv \left( \mathcal{E}(t;\mathbf{d}_t): t \geq 0 \right)$ the ``epidemic process.'' 

\vspace{.1in} \noindent \emph{Note on notation:}
I use ``$\mathbf{d}_t$ notation'' in equations (\ref{eqn_S}-\ref{eqn_RI}) to emphasize how the epidemic state at time $t$ depends on agents' previous distancing behavior. However, to ease exposition, I henceforth suppress this notation, except where needed for clarity.

\vspace{.1in} 
Equations (\ref{eqn_C}-\ref{eqn_RI}) are standard---reflecting agents' progression over time into carriage and then \emph{either} to infection at rate $\sigma$ \emph{or} to viral clearance at rate $\gamma$, and from infection to clearance at rate $\gamma$---but equation (\ref{eqn_S}) is different than in a standard SIR model. 

Each susceptible agent $i$ has a \emph{potential}  meeting (i.e., opportunity for transmission) with another randomly-selected agent $j$ at ``transmission rate'' $\beta > 0$. Since fraction $S(t)$ of the population is susceptible and fraction $C(t)$ have unisolated infection, the flow of potential meetings between susceptible and infected agents across the entire population is $\beta S(t)C(t)$. However, because  susceptible and contagious agents distance themselves with intensity $d_S(t)$ and $d_C(t)$, respectively, each such potential meeting is avoided with probability $(1-\alpha d_S(t))(1-\alpha d_C(t))$. The overall flow of newly-exposed hosts is therefore $\beta (1-\alpha d_S(t))(1-\alpha d_C(t))S(t) C(t)$, a functional form that appeared first in \cite{quercioli2006contagious}.

\paragraph{End of the epidemic.} For analytical convenience, I assume that the epidemic ends at time $T > 0$ when a vaccine is introduced, giving all still-susceptible agents subsequent immunity. (Infected agents remain infected, but there are no new infections after time $T$.) I focus on the case when $T < \infty$ and $T$ is known to all agents, but the analysis can be easily extended to a setting in which $T$ is a random variable drawn from interval support.

\paragraph{Information states and distancing strategies.}
Agents' distancing decisions depend on what they know about their own epidemiological state and the overall epidemic. 
This paper focuses on the simplest non-trivial case, assuming that (i) agents know when they are sick but otherwise observe nothing about their own epidemiological state and (ii) agents have correct beliefs about the epidemic process. The model can be extended in several natural directions, to include diagnostic testing (allowing agents to learn more about their own epidemiological state) and incorrect beliefs, but such extensions are left for future work.  
%

Agent $i$'s \emph{information state} captures what she knows and believes, which depends only on (i) the time $t \geq 0$ and (ii) whether she is sick (state $I$), was previously sick (state $R_I$), or has not yet been sick (combined state $N \equiv S \cup C \cup R_C$). 

An agent currently in information state $\iota \in \{N,I,R_I\}$ is referred to as a ``$\iota$-agent.'' Let $N(t) = S(t) + C(t) + R_C(t)$ denote the mass of $N$-agents; thus, $N(t) + I(t) + R_I(t) = 1$.

Agent $i$'s \emph{distancing strategy} specifies her likelihood of distancing herself at each time $t$ in each information state. $I$-agents are automatically isolated, as mentioned earlier. $R_I$-agents know that they are immune and therefore have a dominant strategy not to distance themselves. It remains to determine the behavior of $N$-agents. 

Let $d_N(t)$ denote the share of $N$-agents who distance themselves. Because susceptible and contagious agents are in the same not-yet-sick information state, $d_N(t) = d_S(t) = d_C(t)$ and equation (\ref{eqn_S}) simplifies to:
\begin{equation}
        S'(t) = - \beta (1-\alpha d_N(t))^2 S(t) C(t)
\label{eqn_S_sym}
\end{equation}

\paragraph{Attack rate. }
Each agent's ex ante likelihood of becoming infected, referred to as the ``attack rate'' of the virus, is equal to $\lim_{t \to \infty} (R_C(t) + R_I(t))$. The attack rate is always strictly less than one, even if a vaccine is never discovered ($T = \infty$) and the epidemic is left completely uncontrolled; see \cite{brauer2012mathematical} and \cite{katriel2012attack} for details.

\subsection{The economy}\label{section:model:economy}

Each agent's activities fall into three broad categories: \emph{isolated activities} that can be performed while distancing (e.g., lifting weights, collaborating online), \emph{public activities} that require entering public spaces but do not require interacting with others (e.g., going for a walk, getting gas), and \emph{social activities} that require interacting physically with others (e.g., meeting friends, working in an office building). An agent who distances herself with intensity $d_i$ can continue engaging in isolated activity, but forgoes fraction $\alpha d_i$ of public and social activity and reduces others' opportunities to join her in social activity. 

\paragraph{Availability for social interaction.}
A not-sick agent who does not distance enjoys all the benefits of public activity, but engages in social activity only with those who are ``socially available.'' Let $A(t)$ denote agents' \textit{availability} for social interaction at time $t$, averaged across the entire population:
\begin{equation}\label{eqn:At}
    A(t) = (1-\alpha d_N(t)) N(t) + R_I(t)
        = 1 - I(t) - \alpha d_N(t) N(t).
\end{equation}
(Recall that $I$-agents are completely unavailable due to sickness, while $R_I$-agents find it optimal not to distance themselves at all.)

\paragraph{Economic output.}
Economic activity generates \textit{benefits}, a broad concept that should be understood to include everything from income (work activity) and access to goods and services (shopping) to psycho-social well-being (from interactions with friends). Sick agents are assumed for simplicity to be incapacitated and hence unable to engage in any economic activity; their economic benefit is normalized to zero. The benefit that well agents get depends on their own and others' distancing decisions, as well as how many people are currently sick.

Let $b(d_i; A)$ denote the flow benefit that agent $i$ gets when well and choosing distance $d_i \in \{0,1\}$, given population-wide average availability $0 \leq A \leq 1$. For concreteness, I assume that
\begin{equation}\label{eqn:bdef}
    b(d_i; A) = a_0 + a_1 (1-\alpha d_i) + a_2 (1 - \alpha d_i)A.
\end{equation}

\vspace{.1in} \noindent \textit{Discussion: meaning of the economic parameters.}
The parameters $a_0,a_1,a_2 > 0$ capture the importance, respectively, of isolated activities, public activities, and social activities for agent welfare. More precisely: $a_0$ captures the baseline level of benefits that a well agent gets while quarantined in the home; $a_1$ captures the extra benefits associated with being able to leave the home, e.g., the extra pleasure and health benefit of walking outside, the extra convenience of shopping in person rather than online; and $a_2$ captures the extra benefits associated with sharing the same physical space with others, e.g., eating out at a restaurant rather than at home, hugging a friend rather than just talking on the phone. (Put differently: $a_2$ is the cost associated with everyone else being quarantined; $a_1$ is the cost of quarantining yourself, in a world where everyone else is quarantined; and $a_0$ is the cost of being sick, in a world where everyone is quarantined.) These parameters can be changed in many ways. For instance, a restaurant service that delivers safely-prepared fresh-cooked meals would increase $a_0$ and reduce $a_2$, as would improved virtual-meeting technology that enhances remote collaboration. 

\vspace{.1in} \noindent \textit{Discussion: functional form of economic benefits.}
The assumption that the benefits of public and social activity are linear in own and others' availability simplifies the presentation but is not essential for the analysis or qualitative findings. For instance, suppose that agents were to prioritize their activities, e.g., by leaving the home only to get urgently-needed supplies, or to visit only with their dearest friends. In that case, each agent's benefit from public and social activity would naturally be a concave function of her own personal distance and of others' availability. The analysis can be easily adapted to allow for such concavity, but at the cost of complicating the presentation.

\paragraph{Economic losses due to the virus.}
If the virus did not exist, then no one would become sick and everyone would choose not to distance. All agents would then get constant flow economic benefit $b(0;1) = a_0 + a_1 + a_2$ and, since the population has unit mass, overall economic activity would also be $b(0;1)$.
The virus reduces economic activity directly, by making people sick, and indirectly, by inducing not-yet-sick agents to distance themselves. Distancing in turn creates two sorts of economic harm: ``private harm'' that distancing oneself reduces one's own public and social activity, and ``social harm'' that distancing oneself reduces others' social activity. 

Let $b_t(d_i)$ be shorthand for each well agent's flow economic benefit at time $t$. $b_t(d_i)$ depends on (i) how many people are recovered from sickness, $R_I(t)$, and how many are currently sick, $I(t) = 1 - N(t) - R_I(t)$, (ii) what fraction $d_N(t)$ of not-yet-sick agents are distancing, and (iii) her own distancing choice $d_i \in \{0,1\}$: 
\begin{align*}
    b_t(d_i) & = a_0 + a_1 (1 - \alpha d_i) + a_2 ( 1- \alpha d_i) A(t)   
\\
    & = a_0 + a_1 (1 - \alpha d_i) + a_2 ( 1- \alpha d_i) ((1-\alpha d_N(t)) N(t) + R_I(t))   
\end{align*}
All agents suffer economically throughout the epidemic, relative to the no-virus benchmark case in which everyone gets flow benefit $a_0 + a_1 + a_2$:
\begin{itemize}
    \item \textit{Sick:} $I$-agents are incapacitated and get zero economic benefit. These agents lose $a_0 + a_1 + a_2$.
    \item \textit{Previously sick:} $R_I$-agents do not distance, but have less opportunity for social interaction due to others' distancing behavior. These agents lose social-activity benefit $a_2(1-A(t))$.
    \item \textit{Not-yet-sick:} $N$-agents choose distancing intensity $d_N(t)$, reducing their public and social activities by a factor of $(1-\alpha d_N)$. These agents lose public-activity benefit $a_1 \alpha d_N$ and lose social-activity benefit $a_2 (1 - (1-\alpha d_N)A(t)) = a_2 (1 - A(t) + \alpha d_N A(t))$. 
\end{itemize}

Let $\Gamma_E(t)$ denote the lost economic activity at time $t$, across the entire population. Overall economic loss across the entire epidemic is $\Gamma_E = \int_0^{\infty} \Gamma_E(t) dt$. (If future losses are discounted by discount factor $0 < \delta \leq 1$, then the overall economic loss has present value $\Gamma_E = \int_0^{\infty} \delta^t  \Gamma_E(t) dt$ at time $0$. I focus on the case without discounting for ease of exposition.)

\begin{lem}\label{lem:1}
$\Gamma_E(t) = a_0 I(t) + a_1 (1-A(t)) + a_2 (1-A(t)^2)$. 
\end{lem}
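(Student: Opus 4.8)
The plan is to prove the lemma by direct computation: I sum the per-capita economic losses across the three information-state groups ($I$-agents, $R_I$-agents, and $N$-agents) weighted by their masses, collect terms according to the three parameters $a_0$, $a_1$, $a_2$, and then simplify each collected coefficient using the availability identity $A(t) = 1 - I(t) - \alpha d_N(t) N(t)$ from equation (\ref{eqn:At}) together with the population identity $N(t) + I(t) + R_I(t) = 1$.

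Concretely, I would write $\Gamma_E(t)$ as the sum of the three loss expressions already itemized in the text: the $I$-agents contribute $I(t)(a_0 + a_1 + a_2)$, the $R_I$-agents contribute $a_2 R_I(t)(1-A(t))$, and the $N$-agents contribute $N(t)\big(a_1 \alpha d_N(t) + a_2(1 - A(t) + \alpha d_N(t) A(t))\big)$. Collecting the coefficient of $a_0$ is immediate, since only the sick forgo their baseline benefit, and this gives the first target term $a_0 I(t)$. The coefficient of $a_1$ comes from the sick and from distancing $N$-agents, namely $I(t) + \alpha d_N(t) N(t)$, which by equation (\ref{eqn:At}) equals exactly $1 - A(t)$, yielding the second target term $a_1(1-A(t))$. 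This substitution of $1 - A(t)$ for $I(t) + \alpha d_N(t) N(t)$ is the workhorse of the whole argument.

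The main obstacle is the coefficient of $a_2$, which must be coaxed into the quadratic form $1 - A(t)^2$. After collecting, it reads $I(t) + R_I(t)(1-A(t)) + N(t)(1-A(t)) + \alpha d_N(t) N(t) A(t)$. I would first merge the two $(1-A(t))$ terms using $R_I(t) + N(t) = 1 - I(t)$ into $(1-I(t))(1-A(t))$, then expand and cancel the resulting $\pm I(t)$ terms while grouping the surviving $A(t)$-weighted pieces, which leaves $1 - A(t) + A(t)\big(I(t) + \alpha d_N(t) N(t)\big)$. A second application of the identity $I(t) + \alpha d_N(t) N(t) = 1 - A(t)$ converts the parenthesized factor into $1-A(t)$, so the expression telescopes to $1 - A(t) + A(t)(1-A(t)) = 1 - A(t)^2$. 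The only non-obvious step, and hence the crux, is recognizing that the availability identity must be invoked twice: once to consolidate the linear-in-$A$ losses and a second time to turn the residual into the quadratic $1-A(t)^2$.
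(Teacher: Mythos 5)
Your proof is correct and is essentially the paper's own argument: both are direct accounting computations starting from the same itemized group losses, with the availability identity $1-A(t) = I(t) + \alpha d_N(t)N(t)$ (together with the definition $A(t) = (1-\alpha d_N(t))N(t)+R_I(t)$) doing all the work, and both arrive at the same factorization $(1-A(t))(1+A(t))$ for the social-activity term. The only difference is bookkeeping order --- the paper organizes the sum activity-by-activity (isolated, public, social) across agent groups, while you organize it group-by-group ($I$, $R_I$, $N$) and then collect the coefficients of $a_0$, $a_1$, $a_2$ --- which is an immaterial transposition of the same double sum.
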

\begin{proof}
See the Appendix.
\end{proof}

\subsection{Individual welfare and equilibrium behavior}\label{subsection:model:behavior}

Each agent seeks to minimize her own total\footnote
{The analysis can be trivially extended to allow for discounting of future losses.} losses during the course of the entire epidemic. 

Let $l_i(t)$ denote agent $i$'s \textit{flow loss}  at time $t$. As discussed earlier: $l_i(t) = a_0 + a_1 + a_2$ if $i$ is sick; $l_i(t) = a_2(1-A(t))$ if $i$ is well and not distancing, where $A(t)$ is others' availability for social interaction; and $l_i(t) = a_1 \alpha + a_2 (1 - A(t) + \alpha A(t))$ if $i$ is well and distancing.

Let $L_{\omega}(t)$ denote agent $i$'s expected future total losses starting from time $t$ if in epidemiological state $\omega \in \{S,C,I,R_C,R_I\}$, referred to as ``continuation losses from state $\omega$.'' (Continuation losses depend on future agent behavior and the future trajectory of the epidemic, but I suppress such notation as much as possible for ease of exposition.)
A susceptible agent who becomes infected at time $t$ will not notice this transition but, at that moment, her continuation losses change from $L_S(t)$ to $L_C(t)$. Let $H(t) \equiv L_C(t) - L_S(t)$ denote the ``harm of susceptible exposure'' at time $t$. 

Let $p_{i}(t)$ denote agent $i$'s subjective belief about her own likelihood of being susceptible at time $t$, conditional on being not-yet-sick.
At time $t$, mass $N(t)$ of agents are not-yet-sick, of whom mass $S(t)$ remain susceptible. Thus, $N$-agents' average likelihood of being susceptible is $\frac{S(t)}{N(t)}$. For simplicity, I will focus on epidemics with \textit{symmetric behavior} by all those in the same information state at each point in time, in which case $p_{i}(t) = \frac{S(t)}{N(t)}$.

\paragraph{Gain from distancing: reduced exposure.}
Suppose that, at time $t$, agent $i$ distances with intensity $d_i \in [0,1]$ and other $N$-agents distance themselves with intensity $d_N \in [0,1]$. 
Agent $i$ is then exposed to the virus at rate $\beta (1-\alpha d_i) (1-\alpha d_N)C(t)$, compared to being exposed at rate $\beta (1-\alpha d_N)C(t)$ if not distancing at all. The ``gain from distancing'' at time $t$, denoted $GAIN_t(d_N)$, is therefore 
\begin{equation}\label{eqn:GAIN}
    GAIN_t(d_i, d_N) = \alpha d_i \beta (1-\alpha d_N)C(t) \times H(t) \times \frac{S(t)}{N(t)}.
\end{equation}
The marginal gain from distancing $_t(d_N) = \frac{\mathrm{d} GAIN_t(d_i,d_N)}{\mathrm{d} d_i}$ is then 
\begin{equation}\label{eqn:MG}
    MG_t(d_N) = \alpha (1-\alpha d_N) \frac{\beta S(t)C(t)H(t)}{N(t)}.
\end{equation}
Note that the marginal gain from distancing is decreasing in $d_N$. This is because, as others distance themselves more, agents face less risk of exposure.

\paragraph{Economic cost of distancing: reduced activity.}
If other $N$-agents choose distancing intensity $d_N$, agent $i$ gets flow economic benefit $a_0 + a_1 (1-\alpha d_i) +  a_2 (1-\alpha d_i) ((1-\alpha d_N) N(t) + R_I(t))$ when choosing distancing intensity $d_i$, compared to $a_0 + a_1 + a_2 ((1-\alpha d_N) N(t) + R_I(t))$ when not distancing at all. The ``cost of distancing'' at time $t$, denoted $COST_t(d_i, d_N)$, is therefore 
\begin{equation}\label{eqn:COST}
    COST_t(d_i, d_N) = a_1 \alpha d_i + a_2 \alpha d_i ((1-\alpha d_N) N(t) + R_I(t)).
\end{equation}
The marginal cost of distancing $MC_t(d_N) = \frac{\mathrm{d} COST_t(d_i, d_N)}{\mathrm{d} d_i}$ is then 
\begin{equation}\label{eqn:MC}
    MC_t(d_N) = a_1 \alpha + a_2 \alpha ((1-\alpha d_N) N(t) + R_I(t)).
\end{equation}
Note that the marginal cost of distancing is decreasing in $d_N$. This is because, as others distance themselves more, there are fewer opportunities for social activity. 

Because the marginal gain and the marginal cost of distancing are each decreasing in $d_N$, the game that $N$-agents play may exhibit ``strategic substitutes'' or ``strategic complements'' (\cite{bulow1985multimarket}), depending on the epidemic state. By contrast, in \cite{quercioli2006contagious}, there are no sources of strategic complementarity. 

\paragraph{``Distancing game'' among agents.}
At each time $t$, the not-yet-sick $N$-agents play a \emph{game} when deciding whether or not to distance. (Sick $I$-agents are incapacitated, while previously sick $R_I$-agents obviously prefer not to distance. Thus, only $N$-agents have a non-trivial decision.) I assume that $N$-agents play a Nash equilibrium (NE) of this game, and focus on symmetric NE in which all $N$-agents choose the same distancing intensity. 

\begin{prop}\label{lem:unique}
Given epidemic state $\mathcal{E}(t) = (S(t), C(t), I(t), R_C(t), R_I(t))$ and harm from susceptible exposure $H(t) = L_C(t) - L_S(t)$, the ``time-$t$ distancing game'' played by not-yet-sick agents has a unique symmetric NE, in which agents choose distancing intensity $d_N^*(t)$. In particular: (i) if $MG_t(0) \leq MC_t(0)$, then $d_N^*(t) = 0$; (ii) if $MG_t(1) \geq MC_t(1)$, then $d_N^*(t) = 1$; and (iii) otherwise, if $MG_t(0) > MC_t(0)$ and $MG_t(1) < MC_t(1)$ then 
\begin{equation}
    d_N^*(t) = \frac{\frac{\beta S(t)C(t)H(t)}{N(t)} - a_1 - a_2(N(t) + R_I(t))}{\alpha \left( \frac{\beta S(t)C(t)H(t)}{N(t)} - a_2 N(t) \right)} \in (0,1).
\end{equation}
\end{prop}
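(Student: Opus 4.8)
The plan is to analyze the time-$t$ distancing game as a static game in which each $N$-agent, taking the population distancing share $d_N$ as given, chooses $d_i \in [0,1]$; the $d_i$-dependent part of her objective is the net benefit from distancing $GAIN_t(d_i,d_N) - COST_t(d_i,d_N)$. Because both $GAIN_t$ and $COST_t$ are linear in $d_i$ (equations (\ref{eqn:GAIN}) and (\ref{eqn:COST})), this is affine in $d_i$ with slope $\phi(d_N) \equiv MG_t(d_N) - MC_t(d_N)$. Hence the best response is bang-bang: the agent distances ($d_i=1$) when $\phi(d_N) > 0$, does not distance ($d_i=0$) when $\phi(d_N) < 0$, and is indifferent (any $d_i \in [0,1]$) when $\phi(d_N) = 0$. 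A symmetric profile with common value $d_N$ is a Nash equilibrium precisely when $d_N$ is a best response to itself, so I would reduce the statement to locating the fixed points of this bang-bang best-response correspondence; the interior equilibrium of case (iii) is the one sustained by indifference, with $d_N^*$ read as the share of $N$-agents who distance.

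The key step is to exploit the affine structure of $\phi$. Writing $\kappa \equiv \beta S(t)C(t)H(t)/N(t)$, equations (\ref{eqn:MG}) and (\ref{eqn:MC}) give $\phi(d_N) = \alpha[(1-\alpha d_N)(\kappa - a_2 N) - (a_1 + a_2 R_I)]$, which is affine in $d_N$ with $\phi(0) - \phi(1) = \alpha^2(\kappa - a_2 N)$. The main obstacle is uniqueness: since $MG_t$ and $MC_t$ are each decreasing in $d_N$, the game can exhibit strategic complements (when $\kappa < a_2 N$), which ordinarily threatens multiplicity of equilibria. The observation that defuses this is that whenever $\kappa \le a_2 N$ one has $(1-\alpha d_N)(\kappa - a_2 N) \le 0$ for every $d_N \in [0,1]$ (since $1 - \alpha d_N \ge 0$), so $\phi(d_N) \le -\alpha(a_1 + a_2 R_I) < 0$ throughout: the strictly positive baseline private cost $a_1 > 0$ ensures the complementarity is never strong enough to make distancing worthwhile. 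Hence exactly one of two regimes holds: either $\phi < 0$ on all of $[0,1]$ (when $\kappa \le a_2 N$), or $\phi$ is strictly decreasing (when $\kappa > a_2 N$). In both regimes $\phi$ changes sign at most once, and only from positive to negative, which is what delivers a unique fixed point.

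With this dichotomy in hand the three cases follow. If $MG_t(0) \le MC_t(0)$, i.e. $\phi(0) \le 0$ (case i), then $\phi \le 0$ on all of $[0,1]$ in either regime, so the only self-consistent choice is $d_N^* = 0$. If $MG_t(1) \ge MC_t(1)$, i.e. $\phi(1) \ge 0$ (case ii), then $\kappa \le a_2 N$ is impossible (it would force $\phi(1) < 0$), so $\phi$ is strictly decreasing and $\phi \ge 0$ on $[0,1]$, giving the unique fixed point $d_N^* = 1$; the same reasoning shows cases (i) and (ii) cannot hold together, since $\phi(1) \ge 0$ forces $\phi(0) > \phi(1) \ge 0$, contradicting $\phi(0) \le 0$. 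In case (iii), $\phi(0) > 0 > \phi(1)$ forces $\kappa > a_2 N$, so $\phi$ is strictly decreasing and, by the intermediate value theorem, has a unique zero $d_N^* \in (0,1)$; since any $d_N$ with $\phi(d_N) \ne 0$ fails to be a best response to itself, this interior indifference point is the unique equilibrium. Solving the indifference condition $MG_t(d_N^*) = MC_t(d_N^*)$, equivalently $(1-\alpha d_N^*)(\kappa - a_2 N) = a_1 + a_2 R_I$, for $d_N^*$ produces the displayed formula, with $d_N^* \in (0,1)$ immediate from $\phi(0) > 0 > \phi(1)$.
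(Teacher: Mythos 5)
Your proof is correct, and at its core it runs on the same engine as the paper's: both arguments reduce the game to fixed points of the bang-bang best response and exploit linearity of $MG_t - MC_t$ in $d_N$ to obtain a single crossing from positive to negative. Where you differ is in the decomposition. The paper handles each case with a separate inequality chain---showing by direct computation that $MG_t(0) \leq MC_t(0)$ forces $MG_t(1) < MC_t(1)$, that $MG_t(1) \geq MC_t(1)$ forces $MG_t(0) > MC_t(0)$, and that in case (iii) the slopes must satisfy $MG_t' < MC_t'$---and then invokes linearity case by case. You instead prove one structural dichotomy up front, via the factorization $\phi(d_N) = \alpha\left[(1-\alpha d_N)(\kappa - a_2 N(t)) - (a_1 + a_2 R_I(t))\right]$ with $\kappa = \beta S(t)C(t)H(t)/N(t)$: when $\kappa \leq a_2 N(t)$ (the strategic-complements regime) the strictly positive term $a_1 + a_2 R_I(t)$ makes not distancing strictly dominant, and when $\kappa > a_2 N(t)$ the net gain $\phi$ is strictly decreasing; all three cases then follow uniformly from single crossing. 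This buys two things: it eliminates the paper's repeated per-case algebra, and it turns the paper's after-the-fact remark---that uniqueness holds because agents have a dominant strategy whenever the game exhibits strategic complements---into the literal engine of the proof. The paper's version, in exchange, is more explicit about each corner-case implication. One small point of phrasing: in case (i) you justify uniqueness by ``$\phi \leq 0$ on all of $[0,1]$,'' but uniqueness requires $\phi < 0$ strictly on $(0,1]$ (a zero of $\phi$ at an interior point would itself be an equilibrium); your dichotomy does deliver the strict statement---either $\phi < 0$ everywhere, or $\phi$ is strictly decreasing with $\phi(0) \leq 0$---so this is a wording slip rather than a gap.
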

\begin{proof}
The proof is in the Appendix.  
\end{proof}

\noindent Uniqueness of symmetric NE is not obvious, since the time-$t$ distancing game may have either strategic substitutes or strategic complements, depending on the epidemic state and the harm of susceptible exposure. However, uniqueness arises because $N$-agents have a dominant strategy whenever the game has strategic complements.

\paragraph{Equilibrium epidemics.}  Let $\mathcal{E}(\mathbf{d}_N)$ denote the epidemic process that results when $N$-agents choose distancing intensity $d_N(t)$ at each time $t$, as determined by the system (\ref{eqn_C}-\ref{eqn_S_sym}). $\mathcal{E}^*$ is referred to as an \textit{equilibrium epidemic process} (or ``behaviorally-constrained epidemic'') if (i) $\mathcal{E}^* = \mathcal{E}(\mathbf{d}_N^*)$ and (ii) given the epidemic process $\mathcal{E}^*$, the time-$t$ distancing game has a symmetric NE in which $N$-agents choose distancing intensity $d_N^*(t)$, for all $t \geq 0$.

\section{Equilibrium Epidemic Analysis} \label{section:equilibrium}

This section characterizes all equilibrium epidemics with symmetric\footnote{I do not know if an equilibrium epidemic can exist with asymmetric behavior by symmetric agents. } agent behavior (or, more simply, ``equilibrium epidemics'') and provides an algorithm for computing them. 
The analysis is organized as follows. 
First, the augmented system of differential equations that governs economic-epidemiological dynamics in the Nash SIR model is presented. This system builds on the well-known system that governs epidemiological dynamics in the SIR model.
Second, for any given ``final prevalences'' $(S(T), C(T), I(T), R_I(T))$ at time $T$ when distancing ends (due to a perfect vaccine being introduced), there is at most one equilibrium epidemic having these final prevalences. 


\subsection{Economic-epidemiological dynamics} \label{subsection:dynamics}

At any given time $t$, the epidemic is characterized by: (i) the mass of agents in each epidemiological state ($S(t)$, $C(t)$, $I(t)$, $R_C(t)$, $R_I(t)$); (ii) the welfare of agents in each epidemiological state (as captured by state-contingent ``continuation losses'' $L_S(t)$, $L_C(t)$, $L_I(t)$, $L_{R_C}(t)$, $L_{R_I}(t)$); and (iii) the distancing behavior of agents who are not yet sick ($d_N^*(t)$).

\paragraph{Epidemiological dynamics.} 
The dynamics of the epidemic state $\mathcal{E}(t) = (S(t)$, $C(t)$, $I(t)$, $R_C(t)$, $R_I(t))$ up until time $T$ are determined by equations (\ref{eqn_C}-\ref{eqn_S_sym}) and depend on $N$-agents' distancing behavior. After the vaccine is introduced at time $T$, equations (\ref{eqn_C}-\ref{eqn_RI}) remain unchanged but, because there is no further transmission of the virus, $S'(t) = 0$.

\paragraph{Distancing behavior.} 
Lemma \ref{lem:unique} characterizes $N$-agents distancing behavior $d_N(t)$ at each time $t$, depending on the epidemic state $\mathcal{E}(t)$ and the harm of susceptible exposure $H(t) = L_C(t) - L_S(t)$. 

\paragraph{Welfare dynamics.} It remains to characterize how the continuation losses associated with each epidemiological state change over time.

\vspace{.1in} \noindent \emph{State $R_I$.}  
Agents who have recovered from sickness remain well\footnote{The analysis can be extended in a straightforward way to allow for the possibility of re-infection, for instance, by having recovered agents transition back at some rate to the susceptible state.} and choose not to distance. Such an agent still suffers from the fact that others are distancing, losing social-activity benefit $a_2(\alpha d_N^*(t)N(t) + I(t))$ relative to the no-virus benchmark in which everyone earns flow benefit $a_0 + a_1 + a_2$. Because these losses are ``sunk'' once experienced, and because $R_I$-agents do not transition to any other state, 
\begin{equation}\label{eqn:LRI}
    L_{R_I}'(t) = - a_2(\alpha d_N^*(t)N(t) + I(t)).
\end{equation}
After time $T$ when new transmission stops, all social distancing stops, i.e., $d_N^*(t) = 0$ for all $t > T$. However, well agents still suffer from not being able to engage socially with those who are sick. In particular, $L_{R_I}(t) = \int_{t' \geq t} a_2 I(t') \mathrm{d}t'$ for all $t \geq T$.

\vspace{.1in} \noindent \emph{State $I$.}
Sick agents incur flow loss $a_0 + a_1 + a_2$ and transition to the recovered state $R_I$ at rate $\gamma$. Thus,
\begin{equation}\label{eqn:LI}
    L_{I}'(t) = - (a_0 + a_1 +a_2) + \gamma (L_I(t) - L_{R_I}(t)).
\end{equation}

\vspace{.1in} \noindent \emph{State $R_C$.}  
Agents who have recovered from carriage never learn that they are immune, and so continue to distance themselves throughout the entire epidemic. In particular, these agents lose public-activity benefit $a_1 \alpha d_N^*(t)$, lose social-activity benefit $a_2 (1 - (1-\alpha d_N^*(t))((1-\alpha d_N^*(t))N(t) + R_I(t))$, and never transition to another state. Thus, 
\begin{equation}\label{eqn:LRC}
    L_{R_C}'(t) = - a_1 \alpha - a_2 (1 - (1-\alpha d_N^*(t))((1-\alpha d_N^*(t))N(t) + R_I(t)).
\end{equation}
After time $T$, because all social distancing stops and $R_C$-agents do not become sick, their only subsequent losses come from not being able to interact with other people who are sick, the same as $R_I$-agents. So, $L_{R_C}(t) = L_{R_I}(t)$ for all $t \geq T$.

\vspace{.1in} \noindent \emph{State $C$.}
Agents with asymptomatic infection incur the same flow losses due to social distancing as all not-yet-sick agents (including those in state $R_C$), but transition to sickness at rate $\sigma$ and to asymptomatic recovery at rate $\sigma$. Thus, 
\begin{equation}\label{eqn:LC}
    L_C'(t) = L_{R_C}'(t) + \gamma (L_I(t) - L_C(t)) + \sigma (L_{R_C}(t) - L_C(t)).
\end{equation}

\vspace{.1in} \noindent \emph{State $S$.}
Susceptible agents incur the same flow losses as all other not-yet-sick agents, but become asymptomatically infected at rate 
$\beta (1-\alpha d_N^*(t))^2 S(t)C(t)$. Thus, 
\begin{equation}\label{eqn:LS}
    L_S'(t) = L_{R_C}'(t) + \beta (1-\alpha d_N^*(t))^2 S(t)C(t) (L_C(t) - L_S(t)).
\end{equation}
After time $T$, $S$-agents remain susceptible and only suffer from not being able to interact with others who are sick, the same as $R_I$-agents. So, $L_S(t) = L_{R_I}(t)$ for all $t \geq T$.

\subsection{Algorithm} \label{subsection:algorithm}
Suppose for a moment that an equilibrium exists with final epidemic state $\mathcal{E}(T)$. Here I discuss how to determine numerically whether an equilibrium epidemic exists with this ``final condition'' and, if so, to compute the entire epidemic trajectory. 

Observe first that the final epidemic state uniquely pins down the trajectory of the epidemic \textit{after} time $T$. Because there is no new transmission, no one distances and subsequent epidemiological dynamics are trivial: contagious agents leave state $C$ at rate $\gamma + \sigma$, fraction $\frac{\sigma}{\gamma + \sigma}$ becoming sick; sick agents recover at rate $\gamma$; and others remain in their current state. Moreover, because $C(T)$ and $I(T)$ together determine $(I(t): t \geq T)$, they also determine $L_{R_I}(t) = L_{R_C}(t) = L_S(t) = \int_{t' \geq t} a_2 I(t') \mathrm{d}t'$ for all $t \geq T$, which in turn determine $L_C(t)$ and $L_I(t)$ after $T$. 

Having determined $L_S(T)$ and $L_C(T)$, we now know $H(T) = L_C(T) - L_S(T)$, the harm of susceptible exposure just before the vaccine is introduced. Together with the final epidemic state, this uniquely determines $N$-agents' equilibrium distancing intensity just \textit{before} the vaccine is introduced, as characterized in Proposition \ref{lem:unique}. 

Having determined $N$-agent behavior $d_N^*(t)$, we now can determine: $S'(T)$ (equation (\ref{eqn_S_sym})) and all other epidemiological dynamics, which remain unchanged (equations (\ref{eqn_C}-\ref{eqn_RI})); $L_{R_I}'(T)$, which in turn determines $L_{I}'(T)$ (equations (\ref{eqn:LRI},\ref{eqn:LI})); and $L_{R_C}'(T)$, which together with $L_{I}'(T)$ determines $L_{C}'(T)$, which in turn determines $L_{S}'(t)$ (equations (\ref{eqn:LRC},\ref{eqn:LC},\ref{eqn:LS})).  
In this way, any \textit{candidate epidemic} can be uniquely traced backward over time, from the given final epidemic state (``final condition''), until one of three things happens: (i) the trajectory hits an invalid boundary\footnote{An ``invalid boundary'' is reached if $S(t)$, $C(t)$, $I(t)$, $R_C(t)$, or $R_I(t)$ equals zero at any time $t > 0$. }, in which case no equilibrium epidemic exists with the given final condition; (ii) the backwards trajectory ``ends'' at the desired initial epidemic state $\mathcal{E}(0) = (1 - \Delta, \Delta, 0,0,0)$, in which case a unique equilibrium  epidemic exists with the given final condition; or (iii) the backwards trajectory ends at some other initial epidemic state $\mathcal{E}(0) \neq (1 - \Delta, \Delta, 0,0,0)$, in which case no equilibrium epidemic exists with the given final condition.

\section{Concluding Remarks}\label{section:conclusion}

This paper introduces Nash SIR, an economic-epidemiological model of a viral epidemic that builds on the classic Susceptible-Infected-Recovered (SIR) model of infectious-disease epidemiology. The model departs from the previous literature by focusing on the complementarities associated with the social-economic activity that can be lost when agents distance themselves to prevent the spread of infection.

\paragraph{A changing game.}
An important complicating feature of this paper's model is that, as the epidemic progresses through its course, the basic strategic structure of the ``distancing game'' that agents play changes over time. For instance, very early in the epidemic when infection remains rare, the distancing game exhibits negative externalities, since agents get little health benefit but suffer substantial economic harm when others distance themselves. However, that changes once infection grows more common, as others' distancing generates greater health benefit. Moreover, the game can shift between having strategic substitutes and strategic complements.

\paragraph{Complementarity and multi-dimensionality of agent actions.}
This paper focuses on a simple context in which the only way to protect oneself from infection is to avoid public and in-person social activity. However, people can also prevent transmission in other ways, such as wearing a mask. Bearing that in mind, it would be interesting to generalize the analysis to allow agents to decide both (i) how much to curtail their public and social activities (``avoidance,'' as in this paper), and (ii) how much to change their behavior during such activities (``vigilance,'' as in \cite{quercioli2006contagious}). The game that agents play in this richer context has an interesting strategic structure, with agents' vigilance decisions always being strategic substitutes, agents' avoidance decisions potentially being either strategic complements or strategic substitutes, more vigilance promoting less avoidance, and more avoidance promoting less vigilance. 

\paragraph{Asymmetry and social inequality.} This paper assumes that agents are symmetric for ease of exposition, but this assumption appears to entail meaningful loss of generality. In particular, assuming that all agents are the same at the start of the epidemic obscures important issues related to inequality and social justice. To see why, suppose that agents belong to one of two social classes: ``elites'' who are able to earn income and care for themselves from home (higher $a_0$) and ``non-elites'' whose income and well-being hinge more on being in public social spaces (higher $a_2$). With less to lose by staying at home, elites will distance themselves relatively early during the epidemic. Having distanced less in the past, non-elites will then be more likely than elites to already have been exposed to the virus---further reducing their relative incentive to distance. In the end, the equilibrium trajectory of the epidemic could exacerbate pre-existing inequality, with non-elites bearing the brunt of the burden of the epidemic, being more likely to become sick and suffering more from the economic contraction associated with elite-driven distancing.

\pagebreak
\bibliographystyle{aer}
\bibliography{references}

@article{adda2007behavior,
  title={Behavior towards health risks: An empirical study using the “Mad Cow” crisis as an experiment},
  author={Adda, J{\'e}r{\^o}me},
  journal={Journal of Risk and Uncertainty},
  volume={35},
  number={3},
  pages={285--305},
  year={2007},
  publisher={Springer}
}

@techreport{alvarez2020simple,
  title={A simple planning problem for covid-19 lockdown},
  author={Alvarez, Fernando E and Argente, David and Lippi, Francesco},
  year={2020},
  institution={National Bureau of Economic Research}
}

@article{Anderson_etal2010,
Author = {Anderson, Roy M. and Hans Heesterbeek and Don Klinkenberg and T. Déirdre Hollingsworth},
Title = {How will country-based mitigation measures influence the course of the COVID-19 epidemic?},
Journal = {The Lancet},
Volume = {395},
Number = {10228},
Year = {2020},
Pages = {931-934},
}

@article{bethune2020covid,
  title={Covid-19 infection externalities: Trading off lives vs. livelihoods},
  author={Bethune, Zachary A and Korinek, Anton},
  month={Issue 11},
  year={2020},
  journal={Covid Economics},
  pages={1-34},
  publisher={CEPR Press}
}

@article{bootsma2007effect,
  title={The effect of public health measures on the 1918 influenza pandemic in US cities},
  author={Bootsma, Martin CJ and Ferguson, Neil M},
  journal={Proceedings of the National Academy of Sciences},
  volume={104},
  number={18},
  pages={7588--7593},
  year={2007},
  publisher={National Acad Sciences}
}

@book{brauer2012mathematical,
  title={Mathematical models in population biology and epidemiology},
  author={Brauer, Fred and Castillo-Chavez, Carlos and Castillo-Chavez, Carlos},
  volume={2},
  year={2012},
  publisher={Springer}
}

@article{bulow1985multimarket,
  title={Multimarket oligopoly: Strategic substitutes and complements},
  author={Bulow, Jeremy I and Geanakoplos, John D and Klemperer, Paul D},
  journal={Journal of Political Economy},
  volume={93},
  number={3},
  pages={488--511},
  year={1985},
  publisher={The University of Chicago Press}
}

@article{chan2016health,
  title={Health, risky behaviour and the value of medical innovation for infectious disease},
  author={Chan, Tat Y and Hamilton, Barton H and Papageorge, Nicholas W},
  journal={The Review of Economic Studies},
  volume={83},
  number={4},
  pages={1465--1510},
  year={2016},
  publisher={Oxford University Press}
}

@article{Dalton_etal2020,
Author = {Dalton, Craig B. and Stephen J. Corbett and Anthea L. Katelari},
Title = {Pre-emptive low cost social distancing and enhanced hygiene implemented before local COVID-19 transmission could decrease the number and severity of case},
Journal = {The Medical Journal of Australia},
Volume = {212},
Number = {10},
Year = {2020},
Pages = {1},
}

@techreport{eichenbaum2020macroeconomics,
  title={The macroeconomics of epidemics},
  author={Eichenbaum, Martin S and Rebelo, Sergio and Trabandt, Mathias},
  year={2020},
  institution={National Bureau of Economic Research}
}

@article{farboodi2020internal,
  title={Internal and external effects of social distancing in a pandemic},
  author={Farboodi, Maryam and Jarosch, Gregor and Shimer, Robert},
  month={Issue 9},
  year={2020},
  journal={Covid Economics},
  pages={22-58},
  publisher={CEPR Press}
}

@article{garabaldi2020modeling,
  title={Modelling contacts and transitions in the SIR epidemics model},
  author={Pietro Garibaldi and Espen R. Moen and Christopher A Pissarides},
  month={Issue 5},
  year={2020},
  journal={Covid Economics},
  pages={1-20},
  publisher={CEPR Press}
}

@article{geoffard1996rational,
  title={Rational epidemics and their public control},
  author={Geoffard, Pierre-Yves and Philipson, Tomas},
  journal={International Economic Review},
  pages={603--624},
  year={1996},
  publisher={JSTOR}
}

@article{glover2020health,
  title={Health versus wealth: On the distributional effects of controlling a pandemic},
  author={Glover, Andrew and Heathcote, Jonathan and Krueger, Dirk and R{\'\i}os-Rull, Jos{\'e}-V{\'\i}ctor},
  month={Issue 6},
  year={2020},
  journal={Covid Economics},
  pages={22-64},
  publisher={CEPR Press}
}

@article{greenwood2019equilibrium,
  title={An equilibrium model of the African HIV/AIDS epidemic},
  author={Greenwood, Jeremy and Kircher, Philipp and Santos, Cezar and Tertilt, Mich{\`e}le},
  journal={Econometrica},
  volume={87},
  number={4},
  pages={1081--1113},
  year={2019},
  publisher={Wiley Online Library}
}

@techreport{jones2020optimal,
  title={Optimal mitigation policies in a pandemic: Social distancing and working from home},
  author={Jones, Callum J and Philippon, Thomas and Venkateswaran, Venky},
  year={2020},
  institution={National Bureau of Economic Research}
}

@article{katriel2012attack,
  title={Attack rates of seasonal epidemics},
  author={Katriel, Guy and Stone, Lewi},
  journal={Mathematical biosciences},
  volume={235},
  number={1},
  pages={56--65},
  year={2012},
  publisher={Elsevier}
}

@techreport{keppo2020avoidance,
  title={For Whom the Bell Tolls: Avoidance Behavior at Breakout in COVID19},
  author={Keppo, Jussi and Marianna Kudlyak and Elena Quercioli and Lones Smith, and Andrea Wilson},
  year={2020},
  institution={working paper}
}

@article{kissler2020social,
  title={Social distancing strategies for curbing the COVID-19 epidemic},
  author={Kissler, Stephen M and Tedijanto, Christine and Lipsitch, Marc and Grad, Yonatan},
  journal={medRxiv},
  year={2020},
  publisher={Cold Spring Harbor Laboratory Press}
}

@article{kremer1996integrating,
  title={Integrating behavioral choice into epidemiological models of AIDS},
  author={Kremer, Michael},
  journal={The Quarterly Journal of Economics},
  volume={111},
  number={2},
  pages={549--573},
  year={1996},
  publisher={MIT Press}
}

@article{krueger2020macroeconomic,
  title={Macroeconomic dynamics and reallocation in an epidemic},
  author={Krueger, Dirk and Uhlig, Harald and Xie, Taojun},
  month={Issue 5},
  year={2020},
  journal={Covid Economics},
  pages={21-55},
  publisher={CEPR Press}
}

@article{lipsitch2020immunity,
  title={Who Is Immune to the Coronavirus?},
  author={Marc Lipsitch},
  journal={New York Times},
  month={April 13},
  year={2020}
}

@misc{quercioli2006contagious,
  title={Contagious matching games},
  author={Elena Quercioli and Lones Smith},
  howpublished={working paper},
  year={2006}
}

@article{toxvaerd2020equilibrium,
  title={Equilibrium social distancing},
  author={Flavio Toxvaerd},
  year={2020},
  journal={Covid Economics},
  pages={forthcoming},
  publisher={CEPR Press}
}
\pagebreak

\appendix
\section{Mathematical proofs}

\paragraph{Proof of Lemma \ref{lem:1}.}
\begin{proof}
Recall that $A(t) = (1-\alpha d_N(t))N(t) + R_I(t)$ and hence $1-A(t) = I(t) + \alpha d_N(t) N(t)$.

\vspace{.1in} \noindent \textit{Isolated activity:} Sick agents get no benefit, while well agents get full benefit $a_0$. The overall economic loss due to reduced isolated activity at time $t$ is therefore $a_0 I(t)$. 

\vspace{.1in} \noindent \textit{Public activity:} Sick agents get no benefit, well agents who do not distance get full benefit $a_1$, and well agents who distance get benefit $a_1 (1-\alpha)$. Since fraction $d_N(t)$ of $N$-agents distance and no $R_I$-agents distance, the overall economic loss due to reduced public activity at time $t$ is therefore $a_1 (I(t) + \alpha d_N(t) N(t)) = a_1(1-A(t))$.

\vspace{.1in} \noindent \textit{Social activity:} Sick agents get no benefit, well agents who do not distance get benefit $a_2 A(t)$, and well agents who distance get benefit $a_2(1-\alpha)A(t)$ (and hence lose $a_2(1 - A(t) +\alpha A(t))$). The overall economic loss due to reduced social activity at time $t$ is therefore $a_2$ times
\begin{align*}
    & I(t) + (1-A(t))(R_I(t) + (1-d_N(t)) N(t)) 
        + (1-A(t)+\alpha A(t))d_N(t)N(t)
\\
    & = {\color{red} I(t)} + (1-A(t))(R_I(t) + (1-d_N(t)) N(t)) 
        + ((1-A(t))(1-\alpha) + {\color{red}\alpha)d_N(t)N(t)}
\\
    & = {\color{red} 1 - A(t)} + (1-A(t)) \left( 
        {\color{blue}R_I(t) + (1-d_N(t))N(t) + (1-\alpha) d_N(t) N(t)}
    \right) 
\\
    & = (1-A(t)) \times \left( 1 +  
        {\color{blue}R_I(t) + (1-\alpha d_N(t))N(t)}
    \right) 
\\
    & = (1-A(t)) \times (1 +  
        {\color{blue}A(t)}) = 1-A(t)^2
\end{align*}
as desired.
\end{proof}

\paragraph{Proof of Proposition \ref{lem:unique}.}
\begin{proof}
(i) \textit{No distancing:} If $MG_t(0) \leq MC_t(0)$, then the time-$t$ distancing game has a symmetric NE in which all agents choose not to distance, i.e., $d_N^*(t) = 0$. To establish uniqueness, note by equations (\ref{eqn:MG}-\ref{eqn:MC}) that $MG_t(0) \leq MC_t(0)$ implies $\frac{\beta S(t)C(t)H(t)}{N(t)} \leq a_1 + a_2(N(t) + R_I(t))$. But then
\begin{align*}
    MG_t(1) & = \alpha (1-\alpha) \frac{\beta S(t)C(t)H(t)}{N(t)}
    \\
    & \leq \alpha (1-\alpha) (a_1 + a_2(N(t) + R_I(t)))
    \\
    & < \alpha (a_1 + a_2((1-\alpha)N(t) + R_I(t)))
    \\
    & = MC_t(1)
\end{align*}
Since $MG_t(d_N)$ and $MC_t(d_N)$ are each linear in $d_N$, the fact that $MG_t(0) \leq MC_t(0)$ and $MG_t(1) < MC_t(1)$ implies that $MG_t(d_N) < MC_t(d_N)$ for all $d_N \in (0,1]$. In particular, $N$-agents have a dominant strategy not to distance. 

(ii) \textit{Maximal distancing:} If $MG_t(1) \geq MC_t(1)$, then a symmetric NE exists in which all agents choose to distance as much as possible, i.e., $d_N^*(t) = 1$. To establish uniqueness, note by equations (\ref{eqn:MG}-\ref{eqn:MC}) that $MG_t(1) \geq MC_t(1)$ implies $(1-\alpha)\frac{\beta S(t)C(t)H(t)}{N(t)} \geq a_1 + a_2((1-\alpha)N(t) + R_I(t))$. But then
\begin{align*}
    MG_t(0) & = \alpha \frac{\beta S(t)C(t)H(t)}{N(t)}
    \\
    & \geq \frac{\alpha}{1-\alpha} (a_1 + a_2((1-\alpha)N(t) + R_I(t)))
    \\
    & > \alpha (a_1 + a_2(N(t) + R_I(t)))
    \\
    & = MC_t(0)
\end{align*}
Since $MG_t(d_N)$ and $MC_t(d_N)$ are each linear in $d_N$, the fact that $MG_t(1) \geq MC_t(1)$ and $MG_t(0) > MC_t(0)$ implies that $MG_t(d_N) > MC_t(d_N)$ for all $d_N \in [0,1)$. In particular, $N$-agents have a dominant strategy to distance. 

(iii) \textit{Intermediate distancing:} If $MG_t(0) > MC_t(0)$ and $MG_t(1) < MC_t(1)$, then it must be that $MG_t'(d_N) = -  \frac{\alpha^2 \beta S(t)C(t)H(t)}{N(t)} < - \alpha^2 a_2 N(t) =  MC_t'(d_N)$ and hence that there exists a unique $d_N^*(t) \in (0,1)$ such that $MG_t(d_N^*(t)) = MC_t(d_N^*(t))$, $MG_t(d_N) > MC_t(d_N)$ for all $d_N < d_N^*(t)$, and $MG_t(d_N) < MC_t(d_N)$ for all $d_N > d_N^*(t)$. In particular, solving $MG_t(d_N^*(t)) = MC_t(d_N^*(t))$ yields
\begin{equation}
    d_N^*(t) = \frac{\frac{\beta S(t)C(t)H(t)}{N(t)} - a_1 - a_2(N(t) + R_I(t))}{\alpha \left( \frac{\beta S(t)C(t)H(t)}{N(t)} - a_2 N(t) \right)}.
\end{equation}
\end{proof}


\end{document}